\title{On the hull-variation problem of equivalent vector rank-metric codes}
\author{Duy Ho and Trygve Johnsen \\
Department of Mathematics and Statistics \\
UiT The Arctic University of Norway, Tromsø
9037, Norway\\
Email: duyho92@gmail.com, trygve.johnsen@uit.no}
\date{ } 
\begin{document}

\maketitle

\theoremstyle{plain} 
\newtheorem{lemma}{Lemma}[section] 
\newtheorem{theorem}[lemma]{Theorem}
\newtheorem{corollary}[lemma]{Corollary}
\newtheorem{proposition}[lemma]{Proposition}
\newtheorem{question}{Open problem}[section]

\theoremstyle{definition} 
\newtheorem{definition}{Definition}[section] 
\newtheorem{remark}{Remark}[section] 
\newtheorem{example}{Example}

\newcommand{\eps}{\varepsilon}
\newcommand{\inprod}[1]{\left\langle #1 \right\rangle}
\newcommand{\la}{\lambda} 
\newcommand{\al}{\alpha}
\newcommand{\om}{\omega} 
\newcommand{\gam}{\gamma}
\newcommand{\be}{\beta}
\newcommand{\sig}{\sigma}
\newcommand\rank{\mathrm{rank}}

\newcommand{\F}{\mathbb{F}}
\newcommand{\K}{\mathbb{K}}
\newcommand{\E}{\mathcal{E}}
\newcommand{\M}{\mathcal{M}}

\newcommand{\duy}[1]{\todo[inline,color=purple!100]{D: #1}}
\newcommand{\trygve}[1]{\todo[inline,color=blue!100]{J: #1}}

\begin{abstract}  
The intersection of a linear code with its dual is called the hull of the code. 
It is known that, for classical linear codes under the Hamming-metric, the dimension of the hull can be reduced up to equivalence. 
This phenomenon leads to the so-called hull-variation problem formulated by Hao Chen in 2023.
In this paper, we consider the analogous problem for vector rank-metric codes, along with their associated matrix codes and extended block codes. 
Our results include the fact that every vector rank-metric code over any finite field  $\mathbb{F}_q$, in particular when  $q=2$ or $q=3$, is equivalent to an LCD code.
\end{abstract}

\medskip
\noindent \textbf{Keywords:}   Vector rank-metric codes, matrix codes, extended block codes, hull dimensions,   $(q,m)$-polymatroids. 

\medskip
\noindent \textbf{MSC (2020):}  Primary 94B05; Secondary 94B60, 94B25.

\section{Introduction}

A classical linear code \( C \) is a subspace of \( \mathbb{F}_q^n \) with the Hamming distance used as the metric.  
The hull of a classical linear code is defined as the intersection of the code with its dual code.  
The concept of the hull of classical linear codes has gained significant interest in recent years, particularly in applications such as quantum error correction and cryptography.

Classical linear codes with trivial hulls are known as linear codes with complementary duals (LCD codes). These codes were first introduced by Massey in \cite{massey}.
 In \cite{sendrier1997},  Sendrier proved that LCD codes are asymptotically good and used them in relation to equivalence testing of linear codes in \cite{sendrier2000}. 
 Recently, LCD codes have become an attractive topic of research as they offer solutions to many cryptographic problems, for example against side-channel attacks and fault non-invasive attacks, see \cite{carlet2016}.  
 In \cite{CMTQP}, it was shown that  any linear code over $\mathbb{F}_q$ (for $q  > 3$) is equivalent to a Euclidean LCD code and any linear code over $\mathbb{F}_{q^2}$ (for $q  > 2$) is equivalent to a Hermitian LCD code.

 The latter result was generalized recently  in \cite{chen2023-3}  and \cite{ling2024}, where it was shown that for any classical linear code $C$ over $\mathbb{F}_{q^2}$ ($q  > 2$) and with Hermitian hull dimension $h$, one can find  a new code which is equivalent to $C$ with Hermitian hull dimension $\ell$ for any $\ell$ with $0 \le \ell\le h$. This phenomenon leads to the so-called
hull-variation problem formulated by Hao Chen in 2023, see \cite{chen2023}. The problem is stated as follows. 

\bigskip
 \textit{Hull-Variation Problem:} When a linear code \( C \) is transformed to an equivalent linear code $C'$, how is its Euclidean or Hermitian hull changed?
\bigskip

 Apart from the Hamming distance, various distance metrics are used in coding theory for different applications. 
One notable example that we will consider here is the rank-metric.
 The class of rank-metric codes was first considered by Delsarte \cite{delsarte1978} in the late 1970s. In 1985, Gabidulin~\cite{gabidulin1985} contributed significantly to the development
of error-correcting codes in the rank-metric, establishing several important properties of rank-metric codes.
 In the past decade, rank-metric codes have become an important research topic with various applications, especially in network coding, cryptography, and error correction in random linear network coding.   For  recent surveys  on the theory and applications of rank-metric code, we refer the reader to \cite{bartz2022}  and \cite{gorla2021}. For a description of applications of vector rank-metric codes, where it is important that the codes are LCD, see \cite[pp. 107--110]{kssd}.


 Building on the previous results in the literature, in this paper we  consider the (Euclidean) hull-variation problem of equivalent rank-metric codes,  along with their associated matrix codes and extended block codes. 
 We show that for any prime power $q$  and any vector rank-metric code with a non-trivial hull, there is an equivalent code whose hull dimension is smaller. As a consequence, we show that every vector rank-metric code over any finite field  $\mathbb{F}_q$, including the cases  $q=2$ and $q=3$, is equivalent to an LCD code.
 These results demonstrate that the behavior of   hulls in the context of vector rank-metric codes is different from that of classical Hamming-metric codes.   Our results also imply that the hull dimension of a matrix code is not an invariant captured by the associated $(q,m)$-polymatroids.

 The content of the paper is organized as follows.  In Section 2, we recall preliminary results for linear codes. In Section 3, we consider the hull-variation problem of vector rank-metric and provide examples. In Section 4, we consider the hull-variation problem of the associated codes and implications in the context of  $(q,m)$-polymatroids.

\section{Preliminaries}

We follow some material from \cite{bonini2023}, \cite{galvez2020}, \cite{gorla2019}, \cite{morrison2014}, \cite{ravagnani2016}, and \cite{reijnders2024}.
Let  $q$   be a   prime power, and let $m, n, k$  be positive integers.
Let $\mathbb{F}_q$ be the finite field with $q$ elements.
Let $\mathbb{F}_q^{n \times m}$ be  the \( \mathbb{F}_q \)-vector space of matrices of size \( n \times m \) with entries in \( \mathbb{F}_q \).

\subsection{Hamming-metric codes, vector rank-metric codes and matrix codes}

An \textit{$\mathbb{F}_{q^m}$-linear  code} $C$ of length $n$ is a subspace of $\mathbb{F}_{q^m}^n$ over $\mathbb{F}_{q^m}$. Let $k$ be the  dimension of $C$ over $\mathbb{F}_{q^m}$. The code $C$   is often described in terms of a \textit{generator matrix}  \( G \in \mathbb{F}_{q^m}^{k \times n} \), which is a full-rank matrix whose rows generate \( C \). 
The \textit{dual} of $C$ is the $\mathbb{F}_{q^m}$-linear code
$
C^\perp := \{v \in  \mathbb{F}^n_{q^m}  \mid \langle v,w \rangle =0 \text{ for all $w \in C$}\},   
$
where $\langle \cdot,\cdot \rangle$ is the standard inner product of $\mathbb{F}^n_{q^m}$.

Let $u=(u_1, \dots, u_n)$ and $v=(v_1, \dots, v_n)$ be in $\mathbb{F}_{q^m}^n$. The \emph{Hamming distance} between $u,v\in \F_{q^m}^n$ is
$d_H(u,v)=|\{i:u_i\ne v_i\}|$, while the \emph{rank distance} is
\[
d_R(u,v)=\dim_{\F_q}(\langle u_1-v_1,\dots,u_n-v_n\rangle_{\F_q}).
\]
If $C$ is equipped with the Hamming distance, it is called a \textit{Hamming-metric code} and denoted by $[n,k]_{q^m}$; if it is equipped with the rank distance, it is called a \textit{vector rank-metric code} and denoted by $[n,k]_{q^m/q}$.

A \textit{matrix code} of size \( n \times m \) over \( \mathbb{F}_q \) is an \( \mathbb{F}_q \)-linear subspace  \( C \subseteq \mathbb{F}_q^{n \times m} \).  
If $C$ is of dimension $k$, then $C$ is called an $[n \times m ,k]_q$ matrix code over \( \mathbb{F}_q \).
 Given a matrix $M \in \mathbb{F}_q^{n \times m}$, we write $\text{Tr}(M)$ for the trace of $M$,  and $M^\top$ for the transpose of $M$.
We define the \textit{trace product} of matrices \( M, N \in \mathbb{F}_q^{n \times m} \) as
$
\langle M, N \rangle := \text{Tr}(M N^\top), 
$
giving the dual code
\[
C^\perp := \{ N \in \mathbb{F}_q^{n \times m} \mid \langle M, N \rangle = 0 \text{ for all } M \in C \}.
\] 
The \textit{rank weight} of a matrix \( M \in C \) is the rank of  $M$ and is denoted \( \operatorname{wt}_R(M) \).   The \textit{rank distance} between two matrices \( M, N \in C \) is the rank of their difference \( M-N \), that is, 
\[
   d_R(M,N) = \operatorname{wt}_R(M - N).
\]

\subsection{Codes associated with vector rank-metric codes}

 For a matrix \( X = [x_{ij}] \in \mathbb{F}_q^{n \times m} \), the \textit{extended row vector} corresponding to \( X \) is the vector  
\[
\rho(X) = (x_{11}, x_{12}, \dots, x_{1m}, x_{21}, x_{22}, \dots, x_{2m}, \dots, x_{n1}, x_{n2}, \dots, x_{nm}) \in \mathbb{F}_q^{nm},
\]
formed by concatenating the rows of the \( n \times m \) matrix \( X \).  The ``extended row vector” is also known in the
literature  as ``flattening” or ``vectorization” of a matrix.
If \( C \subseteq \mathbb{F}_q^{n \times m} \) is a matrix code, then the \textit{extended block code} of \( C \) is given by  
\[
\rho(C) = \{ \rho(X) \mid X \in C \} \subseteq \mathbb{F}_q^{nm}.
\]
The map $\rho$ defines a natural correspondence between matrix codes and linear block codes. Moreover, there is also a natural correspondence between the duals of matrix codes and the
 duals of extended block codes, that is,
\begin{equation} \label{rhocom}
\rho(C^\perp) = \rho(C)^\perp.
\end{equation} 
A generator matrix \( G \) of \( \rho(C) \) will also be called a \textit{generator matrix} of \( C \).  
Each row of \( G \) is the image of the generators of the matrix code \( C \) under \( \rho \).  
This generator matrix is of size \( k \times nm \).
We note that  a generator matrix does not determine a unique matrix code and  it is important to indicate the size of the matrix code whenever a generator matrix is mentioned, see \cite{galvez2020}.

Let \( \mathcal{G} = \{\gamma_1, \dots, \gamma_m\} \) be a basis of \( \mathbb{F}_{q^m} \) over \( \mathbb{F}_q \). 
The matrix associated with a vector \( \alpha \in \mathbb{F}_{q^m}^n \) with respect to \( \mathcal{G} \) is the \( n\times m \) matrix \( M_\mathcal{G} (\alpha) \) with entries in \( \mathbb{F}_q \)  defined by  

\[
\alpha_i = \sum_{j=1}^{m} M_\mathcal{G} (\alpha)_{ij} \gamma_j \quad \text{for all } i = 1, \dots, n.
\] 
With the  \textit{matrix code associated with a vector rank-metric code \( C \subseteq \mathbb{F}_{q^m}^n \) with respect to the basis \( \mathcal{G} \)} we will here mean   
\[
\mathcal{C}_\mathcal{G}(C) = \{ M_\mathcal{G}(\alpha) : \alpha \in C \} \subseteq  \mathbb{F}_q^{n \times m}.
\]
 Let $\mathrm{Trace}:\F_{q^m}\to\F_q$ be the field trace.
Two bases $G=\{\gamma_i\}$ and $G'=\{\gamma_i'\}$ of $\F_{q^m}$ over $\F_q$ are
\emph{dual} if $\mathrm{Trace}(\gamma_i\gamma_j')=\delta_{i,j}$.  A basis is \textit{self-dual} if it is dual with itself. 
\begin{lemma} \label{ravagnani}
Let \( C \subseteq \mathbb{F}_{q^m}^n \) be a vector rank-metric code, and let \( \mathcal{G}, \mathcal{G}' \) be dual bases of $\mathbb{F}_{q^m}$ over \( \mathbb{F}_q \). Then
$
C_{\mathcal{G}'} (C^\perp) = C_\mathcal{G} (C)^\perp.
$ 
\end{lemma}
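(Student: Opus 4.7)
The plan is to prove the containment $\mathcal{C}_{\mathcal{G}'}(C^\perp) \subseteq \mathcal{C}_{\mathcal{G}}(C)^\perp$ by an explicit bilinear-form computation, then upgrade to equality via dimension counting.

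First, I fix $\al \in C$ and $\be \in C^\perp$ and set $M := M_{\mathcal{G}}(\al)$ and $N := M_{\mathcal{G}'}(\be)$. Writing out the entries, $\al_i = \sum_{j} M_{ij}\gam_j$ and $\be_i = \sum_{k} N_{ik}\gam_k'$, so that
\[
\al_i\be_i \;=\; \sum_{j,k} M_{ij}N_{ik}\,\gam_j\gam_k'.
\]
Applying the field trace $\mathrm{Trace}\colon \F_{q^m}\to\F_q$ and using the duality $\mathrm{Trace}(\gam_j\gam_k')=\delta_{j,k}$ collapses the inner sum, yielding $\mathrm{Trace}(\al_i\be_i)=\sum_j M_{ij}N_{ij}$. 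Summing over $i$ then gives the key identity
\[
\mathrm{Trace}\bigl(\langle \al,\be\rangle\bigr) \;=\; \sum_{i,j} M_{ij}N_{ij} \;=\; \mathrm{Tr}(MN^\top) \;=\; \langle M,N\rangle,
\]
where on the left we use $\F_{q^m}$-linearity of the field trace to pull the sum over $i$ inside, and on the right $\mathrm{Tr}$ denotes the matrix trace.

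From this identity the inclusion is immediate: if $\be\in C^\perp$ then $\langle \al,\be\rangle = 0$, hence $\langle M,N\rangle = 0$ for every $\al\in C$, so $N \in \mathcal{C}_{\mathcal{G}}(C)^\perp$. To conclude equality I compare $\F_q$-dimensions. Since $M_{\mathcal{G}}$ is an $\F_q$-linear bijection $\F_{q^m}^n \to \F_q^{n\times m}$, one has $\dim_{\F_q}\mathcal{C}_{\mathcal{G}}(C) = m\dim_{\F_{q^m}}C = mk$, and likewise $\dim_{\F_q}\mathcal{C}_{\mathcal{G}'}(C^\perp) = m(n-k)$. On the other hand, $\mathcal{C}_{\mathcal{G}}(C)^\perp$ is the orthogonal of an $mk$-dimensional subspace of $\F_q^{n\times m}$ under the non-degenerate trace product, so it has $\F_q$-dimension $nm - mk = m(n-k)$. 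Equality of dimensions forces the inclusion to be equality.

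The main obstacle is the key identity linking the two trace pairings; once the duality $\mathrm{Trace}(\gam_j\gam_k') = \delta_{j,k}$ is invoked the rest of the argument is straightforward bookkeeping and a dimension count. One subtle point worth writing out carefully is that the field-trace and matrix-trace get identified only because the dual-basis condition eliminates all cross terms $\gam_j\gam_k'$ with $j\neq k$; any other choice of basis pair would produce an extraneous Gram matrix and break the identification.
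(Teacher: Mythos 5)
Your proof is correct. The paper itself states this lemma without proof (it is quoted from Ravagnani's duality paper), and your argument is exactly the standard one from that source: the identity $\mathrm{Trace}(\langle \alpha,\beta\rangle)=\mathrm{Tr}\bigl(M_{\mathcal{G}}(\alpha)\,M_{\mathcal{G}'}(\beta)^\top\bigr)$ obtained by collapsing cross terms via $\mathrm{Trace}(\gamma_j\gamma_k')=\delta_{j,k}$, which gives one inclusion, followed by a dimension count using that $M_{\mathcal{G}}$ is an $\mathbb{F}_q$-linear bijection and that the trace product on $\mathbb{F}_q^{n\times m}$ is non-degenerate. One small wording fix: the field trace $\mathrm{Trace}\colon\mathbb{F}_{q^m}\to\mathbb{F}_q$ is $\mathbb{F}_q$-linear, not $\mathbb{F}_{q^m}$-linear; additivity is all you need to pull the sum over $i$ inside, so the step is fine but the stated justification should be corrected.
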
 
As described in the previous subsection, for each matrix code, we have a corresponding extended block code from the map $\rho$. The  \textit{extended block code associated with a vector rank-metric code \( C \subseteq \mathbb{F}_{q^m}^n \) with respect to the basis \( \mathcal{G} \)} is   
\[
\rho(\mathcal{C}_\mathcal{G}(C)) = \{ \rho(M_\mathcal{G}(\alpha)) : \alpha \in C \} \subseteq  \mathbb{F}_q^{nm}.
\]

\subsection{Equivalent codes}

For any distance function $f$ on a $K$-vector space $V$ a linear isometry  is a $K$-linear map  
$
\varphi: V \to V
$
such that  
$
f(\varphi(v),\varphi(w)) = f(u,v)$  for every  $v,w \in V.
$
We now describe different notions of equivalent codes. 


\bigskip

1. \underline{Hamming-metric codes}.  Two $\mathbb{F}_{q^m}$-linear codes $C$ and $C'$ are equivalent with respect to the Hamming-metric if there exists a monomial matrix $M \in \mathbb{F}_{q^m}^{n \times n}$ and a pair of generator matrices $G,G'$ of $C,C'$, respectively, satisfying $G'= GM$. 

\bigskip

2. \underline{Vector rank-metric codes}.  Two $[n,k]_{q^m/q}$ vector rank-metric codes \( C \) and \( C' \) are equivalent with respect to the rank-metric if there exists \( A \in \text{GL}_n(q) \) and a pair of generator matrices \( G, G' \) of \( C, C' \), respectively, satisfying  
$
G' = G A.
$

\bigskip

3. \underline{Matrix codes}. 
Two matrix codes \( C, C' \subseteq \mathbb{F}_q^{n \times m} \) are equivalent if there is an \( \mathbb{F}_q \)-linear isometry  (for the rank metric)
$
\varphi: \mathbb{F}_q^{n \times m} \to \mathbb{F}_q^{n \times m}
$
such that \( \varphi(C) = C' \).

\bigskip

4. \underline{Extended block codes}.  We follow the setting in \cite{morrison2014} and \cite{morrison2015}. 
We say that $f: \mathbb{F}_q^{nm} \to \mathbb{F}_q^{nm}$ is a \textit{linear $n \times m$ matrix-equivalence map on extended row vectors} if there exists an \( \mathbb{F}_q \)-linear isometry $g: \mathbb{F}_q^{n \times m} \to \mathbb{F}_q^{n \times m}$ such that, for all $A \in \mathbb{F}_q^{n \times m}$,
$
f(\rho(A)) = \rho(g(A)).
$  
The group of all such maps is denoted \({Equiv}_{\text{Vec}}(\mathbb{F}_{q}^{n \times m})\) and is fully described in \cite{morrison2014}.
Let $C$ and $C'$ be two matrix codes. We say that \( \rho(C) \) and \( \rho(C') \) are \textit{linear \( n \times m \) matrix-equivalent} if there exists \( F \in {Equiv}_{\text{Vec}}(\mathbb{F}_{q}^{n \times m}) \) and a pair of generator matrices \( G, G' \) of \( \rho(C) , \rho(C') \), respectively, satisfying  
$
G' = G F.
$

\bigskip

5. \underline{Codes associated with vector rank-metric codes}.  
In \cite{morrison2014}, we have the following.
 
 \begin{lemma} \label{equivalencetransfer} Let $C$ and $C'$ be two vector rank-metric codes. If $C$ and $C'$ are equivalent, then $\mathcal{C}_\mathcal{G}(C)$ and $\mathcal{C}_\mathcal{G}(C')$ are equivalent. 

\end{lemma}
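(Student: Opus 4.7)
The plan is to unpack the definition of equivalence on each side and to observe that left multiplication by an invertible $\mathbb{F}_q$-matrix is a rank-preserving $\mathbb{F}_q$-linear map. Because two vector rank-metric codes $C$ and $C'$ are equivalent exactly when there is some $A \in \mathrm{GL}_n(q)$ and generator matrices with $G' = GA$, I would first argue at the level of codewords: every element of $C'$ has the form $\alpha A$ for a unique $\alpha \in C$, and the assignment $\alpha \mapsto \alpha A$ is an $\mathbb{F}_{q^m}$-linear bijection $C \to C'$.

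Next, I would compute $M_{\mathcal{G}}(\alpha A)$ in terms of $M_{\mathcal{G}}(\alpha)$. Writing $(\alpha A)_k = \sum_i A_{ik}\alpha_i = \sum_i A_{ik} \sum_j M_{\mathcal{G}}(\alpha)_{ij} \gamma_j$ and reading off the coefficient of $\gamma_j$, one gets $M_{\mathcal{G}}(\alpha A)_{kj} = \sum_i A_{ik} M_{\mathcal{G}}(\alpha)_{ij}$, that is,
\[
M_{\mathcal{G}}(\alpha A) \;=\; A^{\top} M_{\mathcal{G}}(\alpha).
\]
Consequently the $\mathbb{F}_q$-linear map $\varphi : \mathbb{F}_q^{n\times m} \to \mathbb{F}_q^{n\times m}$, $M \mapsto A^{\top} M$, sends $\mathcal{C}_{\mathcal{G}}(C)$ bijectively onto $\mathcal{C}_{\mathcal{G}}(C')$.

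To conclude, I would verify that $\varphi$ is a linear isometry for the rank metric. Linearity is immediate, and since $A^{\top} \in \mathrm{GL}_n(\mathbb{F}_q)$, left multiplication by $A^{\top}$ preserves $\mathbb{F}_q$-row space and hence rank, so $\mathrm{rank}(\varphi(M) - \varphi(N)) = \mathrm{rank}(A^{\top}(M-N)) = \mathrm{rank}(M-N)$ for all $M,N$. By the definition of equivalence for matrix codes, this shows $\mathcal{C}_{\mathcal{G}}(C)$ and $\mathcal{C}_{\mathcal{G}}(C')$ are equivalent.

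I do not expect any serious obstacle; the whole argument hinges on the identity $M_{\mathcal{G}}(\alpha A) = A^{\top} M_{\mathcal{G}}(\alpha)$. The only subtle point is bookkeeping: the right action of $A$ on row vectors of length $n$ becomes the left action of $A^{\top}$ on the row index of the associated matrix, and it is the transpose, not $A$ itself, that appears in the induced isometry $\varphi$.
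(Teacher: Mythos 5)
Your proposal is correct, and the key identity $M_{\mathcal{G}}(\alpha A)=A^{\top}M_{\mathcal{G}}(\alpha)$ is computed with the right bookkeeping (it matters here that $A$ has entries in $\mathbb{F}_q$, so the coefficients $\sum_i A_{ik}M_{\mathcal{G}}(\alpha)_{ij}$ really are the $\mathcal{G}$-coordinates; you use this implicitly and it holds since $A\in\mathrm{GL}_n(q)$). The paper gives no proof of this lemma --- it simply cites Morrison (2014) --- and your argument is exactly the standard one underlying that citation: the right action of $A$ on codewords becomes left multiplication by $A^{\top}$ on associated matrices, which is a rank-preserving $\mathbb{F}_q$-linear isometry of $\mathbb{F}_q^{n\times m}$ carrying $\mathcal{C}_{\mathcal{G}}(C)$ onto $\mathcal{C}_{\mathcal{G}}(C')$.
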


Also from the setting above, we have the following. 

 \begin{lemma} \label{equivalencetransferblock} Let $C$ and $C'$ be two vector rank-metric codes. If $C$ and $C'$ are equivalent, then $\rho(\mathcal{C}_\mathcal{G}(C))$ and $\rho(\mathcal{C}_\mathcal{G}(C'))$ are linear \( n \times m \) matrix-equivalent. 

\end{lemma}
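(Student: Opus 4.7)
The plan is to reduce the statement directly to Lemma~\ref{equivalencetransfer} via the bijection $\rho:\mathbb{F}_q^{n\times m}\to\mathbb{F}_q^{nm}$, since the definition of $\mathrm{Equiv}_{\text{Vec}}(\mathbb{F}_q^{n\times m})$ is built precisely so that $\mathbb{F}_q$-linear rank-metric isometries on matrices correspond, under conjugation by $\rho$, to elements of this group acting on extended row vectors.

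I would start from the assumption that $C$ and $C'$ are equivalent vector rank-metric codes. By Lemma~\ref{equivalencetransfer}, the matrix codes $\mathcal{C}_\mathcal{G}(C)$ and $\mathcal{C}_\mathcal{G}(C')$ are equivalent, so there exists an $\mathbb{F}_q$-linear rank-metric isometry $\varphi:\mathbb{F}_q^{n\times m}\to\mathbb{F}_q^{n\times m}$ with $\varphi(\mathcal{C}_\mathcal{G}(C))=\mathcal{C}_\mathcal{G}(C')$. Define $f:\mathbb{F}_q^{nm}\to\mathbb{F}_q^{nm}$ by $f(\rho(A))=\rho(\varphi(A))$ for every $A\in\mathbb{F}_q^{n\times m}$; this is well defined and $\mathbb{F}_q$-linear because $\rho$ is an $\mathbb{F}_q$-linear bijection and $\varphi$ is $\mathbb{F}_q$-linear. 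By construction $f$ satisfies the defining property of $\mathrm{Equiv}_{\text{Vec}}(\mathbb{F}_{q}^{n\times m})$ with witness $g=\varphi$, hence $f$ belongs to this group. Moreover, applying $f$ to the extended block code of $C$ gives $f(\rho(\mathcal{C}_\mathcal{G}(C)))=\rho(\varphi(\mathcal{C}_\mathcal{G}(C)))=\rho(\mathcal{C}_\mathcal{G}(C'))$.

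To produce the matrix equation required by the definition of linear $n\times m$ matrix-equivalence, I would represent $f$ as right multiplication by a matrix $F\in\mathbb{F}_q^{nm\times nm}$ on row vectors. If $G$ is any generator matrix of $\rho(\mathcal{C}_\mathcal{G}(C))$, then the rows of $GF$ are the images under $f$ of the rows of $G$; they span $f(\rho(\mathcal{C}_\mathcal{G}(C)))=\rho(\mathcal{C}_\mathcal{G}(C'))$, and since $f$ is bijective they remain linearly independent. Hence $G':=GF$ is a generator matrix of $\rho(\mathcal{C}_\mathcal{G}(C'))$, and the identity $G'=GF$ witnesses the linear $n\times m$ matrix-equivalence of $\rho(\mathcal{C}_\mathcal{G}(C))$ and $\rho(\mathcal{C}_\mathcal{G}(C'))$.

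There is no real obstacle here: Lemma~\ref{equivalencetransfer} supplies the isometry, and the remainder is a routine transport of structure along $\rho$ together with the observation that the definition of $\mathrm{Equiv}_{\text{Vec}}(\mathbb{F}_q^{n\times m})$ is tailor-made for this argument. The only point that requires a brief verification is the passage from the abstract linear map $f$ to a matrix $F$ acting on generator matrices by right multiplication, which reduces to the elementary fact that, for row-vector conventions, the matrix of an $\mathbb{F}_q$-linear map on $\mathbb{F}_q^{nm}$ acts on the right on any generating set arranged as rows.
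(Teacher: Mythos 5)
Your proposal is correct and follows exactly the route the paper intends: the paper gives no explicit proof, merely asserting the lemma ``from the setting above,'' and that setting is precisely the reduction you carry out — invoke Lemma~\ref{equivalencetransfer} to get an $\mathbb{F}_q$-linear isometry $\varphi$ between the associated matrix codes, transport it along $\rho$ to obtain an element of ${Equiv}_{\text{Vec}}(\mathbb{F}_q^{n\times m})$, and read off the generator-matrix identity. Your additional care about passing from the abstract map $f$ to a matrix $F$ acting on the right is a harmless (and welcome) elaboration of a detail the paper leaves implicit.
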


\subsection{The hull of codes}
We have defined the duality notions  for Hamming-metric codes, vector rank-metric codes, matrix codes, and extended block codes. If $C$ is one such code, then the \textit{(Euclidean) hull} of $C$ is the intersection of $C$ and $C^\perp$. We denote the hull of $C$ by $H(C)$. 
The following characterization is well-known for Hamming-metric codes and can be easily verified for vector rank-metric codes.
\begin{lemma} \label{rankhull} Let $C$ be either an $[n,k]_{q^m}$ Hamming-metric code    or an   $[n,k]_{q^m/q}$  vector rank-metric  code.  Let $G$  be a generator matrix of $C$ over $\mathbb{F}_{q^m}$. Then
$\dim(H(C))=k-\rank(GG^\top)$.
\end{lemma}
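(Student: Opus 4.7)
The plan is to treat both cases uniformly, since in the excerpt the dual code $C^\perp$ is defined via the standard inner product on $\mathbb{F}_{q^m}^n$ in both the Hamming and the vector rank-metric settings. So the notion of hull is identical in the two cases, and one argument handles them simultaneously.

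First I would use that $G \in \mathbb{F}_{q^m}^{k \times n}$ has full row rank $k$ (since it is a generator matrix), which means the $\mathbb{F}_{q^m}$-linear map
\[
\phi : \mathbb{F}_{q^m}^k \longrightarrow C, \qquad x \longmapsto xG,
\]
is an isomorphism of $\mathbb{F}_{q^m}$-vector spaces. I would then translate membership in the hull to a condition on $x$: a codeword $c = xG$ lies in $C^\perp$ iff $\langle c, g_i\rangle = 0$ for each row $g_i$ of $G$, which is equivalent to $(xG)G^\top = 0$, i.e.\ $x(GG^\top) = 0$.

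Hence $\phi$ restricts to an isomorphism between the left kernel of $GG^\top$ acting on $\mathbb{F}_{q^m}^k$ and the hull $H(C) = C \cap C^\perp$. Applying the rank-nullity theorem to the $k \times k$ matrix $GG^\top$ then gives
\[
\dim_{\mathbb{F}_{q^m}} H(C) \;=\; \dim_{\mathbb{F}_{q^m}} \ker(GG^\top) \;=\; k - \rank(GG^\top),
\]
as required.

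There is no real obstacle here; the only point that needs to be flagged is that the argument only depends on the inner product defining $C^\perp$ and on $G$ having full row rank, neither of which changes between the Hamming and vector rank-metric settings. This is why the same formula is valid in both situations, even though the underlying distance functions differ.
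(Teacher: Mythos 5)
Your proof is correct, and it is the standard argument: the paper itself states this lemma without proof, remarking only that it is well-known for Hamming-metric codes and easily verified for vector rank-metric codes. Your observation that the dual (and hence the hull) is defined by the same inner product in both settings, so that a single rank--nullity computation on the left kernel of $GG^\top$ covers both cases, is exactly the intended verification.
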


Let \( C \) be an \( \mathbb{F}_{q^m} \)-linear code,   equipped with either the Hamming distance or the rank
 distance. If \( C \cap C^{\perp} = \{\mathbf{0}\} \), then we say that \( C \) is a \textit{linear code with complementary dual}.  We will abbreviate such a code as an  \textit{(Euclidean) LCD code}. 

We are also interested in   the hulls of matrix codes and extended block codes associated with vector rank-metric codes. For this purpose,  we assume that $\mathcal{G}$ is a self-dual basis of $\mathbb{F}_{q^m}$ over $\mathbb{F}_q$.  Such a self-dual basis exists if and only if $q$ is even or both $q$ and $m$ are odd, see \cite{jungnickel1990, seroussi1980}. In this case, we have the following result from \cite{liu2019}.

 \begin{lemma} \label{hulltransfer}  Let \( q \) be even or both \( q \) and \( m \) be odd, and let  
\( \mathcal{G} \) be a self-dual basis of \( \mathbb{F}_{q^m} \) over \( \mathbb{F}_q \). Let $C$ be a $[n,k]_{q^m/q}$ vector rank-metric code. Then $\mathcal{C}_\mathcal{G}(C \cap C^\perp)=\mathcal{C}_\mathcal{G}(C) \cap \mathcal{C}_\mathcal{G}(C)^\perp$.

\end{lemma}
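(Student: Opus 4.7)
The plan is to combine the self-duality of $\mathcal{G}$ with the $\mathbb{F}_q$-linear bijectivity of the matrix-representation map $M_\mathcal{G}: \mathbb{F}_{q^m}^n \to \mathbb{F}_q^{n \times m}$.

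First, I would invoke Lemma \ref{ravagnani} in the special case where the dual basis $\mathcal{G}'$ equals $\mathcal{G}$ itself. Since the hypothesis is that $\mathcal{G}$ is self-dual (which is precisely the condition for the lemma to apply with $\mathcal{G}' = \mathcal{G}$), this immediately gives
\[
\mathcal{C}_\mathcal{G}(C^\perp) = \mathcal{C}_\mathcal{G}(C)^\perp.
\]
At this point the problem reduces to showing $\mathcal{C}_\mathcal{G}(C \cap C^\perp) = \mathcal{C}_\mathcal{G}(C) \cap \mathcal{C}_\mathcal{G}(C^\perp)$, i.e.\ that the operator $\mathcal{C}_\mathcal{G}(\cdot)$ commutes with intersection.

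Second, I would establish this image-of-intersection equality from the fact that $M_\mathcal{G}$ is an $\mathbb{F}_q$-linear bijection (this is the content of the definition: an element of $\mathbb{F}_{q^m}^n$ is reconstructed uniquely from its matrix of coordinates with respect to $\mathcal{G}$). The inclusion $\mathcal{C}_\mathcal{G}(C \cap C^\perp) \subseteq \mathcal{C}_\mathcal{G}(C) \cap \mathcal{C}_\mathcal{G}(C^\perp)$ is immediate. For the reverse inclusion, suppose a matrix $X$ lies in $\mathcal{C}_\mathcal{G}(C) \cap \mathcal{C}_\mathcal{G}(C^\perp)$; then $X = M_\mathcal{G}(\alpha) = M_\mathcal{G}(\beta)$ for some $\alpha \in C$ and $\beta \in C^\perp$, and by injectivity of $M_\mathcal{G}$ we get $\alpha = \beta \in C \cap C^\perp$, so $X \in \mathcal{C}_\mathcal{G}(C \cap C^\perp)$.

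Combining the two steps yields the stated equality. I do not expect any real obstacle here: the substantive input is the identity $\mathcal{C}_\mathcal{G}(C^\perp) = \mathcal{C}_\mathcal{G}(C)^\perp$, which is already packaged in Lemma \ref{ravagnani}, while the remaining argument is a formal consequence of $M_\mathcal{G}$ being bijective. The only point worth double-checking is that Lemma \ref{ravagnani}'s duality is interpreted with respect to the \emph{trace product} on $\mathbb{F}_q^{n \times m}$ (as defined above), so that the right-hand side $\mathcal{C}_\mathcal{G}(C)^\perp$ is indeed the same dual used in the statement we are proving.
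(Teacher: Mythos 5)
Your proof is correct. Note that the paper does not actually prove this lemma---it is imported without proof from the reference \cite{liu2019}---so there is no in-paper argument to compare against; your two-step reduction (Lemma \ref{ravagnani} with $\mathcal{G}'=\mathcal{G}$, which the self-duality hypothesis licenses, followed by the observation that the injective map $M_\mathcal{G}$ commutes with intersections) is the natural self-contained argument, and your closing caveat that the dual on the matrix side must be the trace-product dual is exactly the right compatibility check.
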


\section{On the hull-variation problem of  vector rank-metric codes}

It was shown in \cite[Corollary 5.15]{CMTQP} that the following holds.

\begin{theorem} \label{established} \label{carletlcd}
Let \( q > 3 \) be a prime power and let \( C \) be an \([n,k]_q\) code. Then there exists an equivalent \([n,k]_q\) Hamming-metric  code \( C' \) which is LCD.
\end{theorem}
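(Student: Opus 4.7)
The plan is to realize Hamming equivalence through diagonal scaling and show that for $q > 3$ the ground field has enough structure to eliminate the hull. By Lemma~\ref{rankhull}, it suffices to find $d_1,\ldots,d_n \in \F_q^*$ such that $GD^2G^\top$ has rank $k$ for $D = \mathrm{diag}(d_1,\ldots,d_n)$; then $G' = GD$ generates an equivalent Hamming-metric code that is LCD.

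First I would change variables to $y_i = d_i^2$ and study the polynomial
\[
P(y_1,\ldots,y_n) = \det\bigl(G\,\mathrm{diag}(y_1,\ldots,y_n)\,G^\top\bigr).
\]
The Cauchy--Binet formula, applied to the factorization $G\,\mathrm{diag}(y_i) \cdot G^\top$, expresses $P$ as the multilinear polynomial
\[
P(y_1,\ldots,y_n) = \sum_{|I|=k} \det(G_I)^2 \prod_{i \in I} y_i,
\]
where $G_I$ is the $k \times k$ submatrix of $G$ on the columns indexed by $I$. Since $G$ has full row rank, at least one coefficient $\det(G_{I_0})^2$ is nonzero, so $P$ is a nonzero polynomial possessing a distinguished top monomial $\prod_{i \in I_0} y_i$ of total degree $k$.

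Next I would apply Alon's Combinatorial Nullstellensatz to locate a nonvanishing assignment inside the set $S = (\F_q^*)^2$ of nonzero squares. Since $\deg_{y_i}(P)=1$ for $i \in I_0$ and $\deg_{y_i}(P)=0$ otherwise, the Nullstellensatz furnishes values $y_i \in S$ with $P(y_1,\ldots,y_n) \neq 0$ as soon as $|S| \geq 2$. For $q > 3$ this is automatic: if $q$ is even then $|S| = q - 1 \geq 3$, and if $q$ is odd then $|S| = (q-1)/2 \geq 2$. Lifting each chosen $y_i$ to a square root in $\F_q^*$ yields the diagonal matrix $D$, and the code $C'$ generated by $G' = GD$ is the required LCD code equivalent to $C$.

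The main obstacle is the passage from nonvanishing of $P$ on the large torus $(\F_q^*)^n$ to nonvanishing on the proper subset $S^n$ of square tuples, and this is precisely where $q > 3$ is essential. For $q \in \{2,3\}$ the set of nonzero squares collapses to the singleton $\{1\}$, so $P$ is forced to take the single value $\det(GG^\top)$, permutations of columns leave this quantity invariant, and no monomial transformation can rescue the hull; the failure of the theorem in these cases is thus genuine, which is exactly the contrast motivating the rank-metric results of the present paper.
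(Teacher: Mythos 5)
Your argument is correct, but note that the paper does not prove this statement at all: Theorem~\ref{carletlcd} is imported verbatim from \cite{CMTQP} (Corollary~5.15 there), so you have supplied a self-contained proof of a quoted result. Your route is also genuinely different from the technique used both in \cite{CMTQP} and in the paper's own rank-metric analogue (Theorem~\ref{main}): those arguments first put a generator matrix of the hull into the block form $[\,I_h\ A\,]$, extend to $G=\bigl[\begin{smallmatrix}I_h & A\\ \mathbf{0} & B\end{smallmatrix}\bigr]$, and then rescale only the $h$ hull coordinates by elements $a$ with $a^2\neq 1$ so that $YY^\top-I_h$ is invertible; this localization is what yields the finer hull-variation statement (every intermediate dimension $\ell\le h$ is attained). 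Your Cauchy--Binet expansion $\det(G\,\mathrm{diag}(y)\,G^\top)=\sum_{|I|=k}\det(G_I)^2\prod_{i\in I}y_i$ combined with the Combinatorial Nullstellensatz over the set of nonzero squares is a clean one-shot existence proof of an LCD representative, and it correctly isolates $q>3$ as the condition $|S|\ge 2$; the price is that it gives no control over intermediate hull dimensions, so it proves Theorem~\ref{carletlcd} but not the refinement in Theorem~\ref{sanlinglcd}. One small imprecision: in invoking the Nullstellensatz you write $\deg_{y_i}(P)=0$ for $i\notin I_0$, which need not hold; what matters is that the exponents $t_i$ of the distinguished top-degree monomial $\prod_{i\in I_0}y_i$ satisfy $|S|>t_i$, and since $P$ is multilinear and homogeneous of degree $k$ this is exactly the condition $|S|\ge 2$. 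Your closing remark that for $q\in\{2,3\}$ monomial matrices satisfy $MM^\top=I$, so the hull dimension is an invariant of Hamming equivalence, is also correct and is the standard reason the theorem genuinely fails there.
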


In \cite[Corollary 2.2]{chen2023-3} and \cite[Theorem 6]{ling2024}, the previous Theorem \ref{carletlcd} was generalized in the setting of Hermitian hulls, with applications in constructing entanglement assisted quantum error-correcting codes. This result can be adapted to the case of Euclidean hulls, which we state as follows.

\begin{theorem} \label{notestablished} \label{sanlinglcd}
Let \( q > 3 \) be a prime power and let \( C \) be an \([n,k]_q\) code with \( \dim(H(C)) = h \). Then there exists an equivalent \([n,k]_q\) code \( C' \)
with \( \dim(H(C')) = \ell \) for each \( \ell\in \{0,1,\dots,h\} \).
\end{theorem}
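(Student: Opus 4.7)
The plan is to induct: it suffices to prove that if $\dim(H(C)) = h > 0$ and $q > 3$, then there exists a Hamming-equivalent code $C'$ with $\dim(H(C')) = h - 1$. Applying this reduction $h - \ell$ times starting from $C$ yields an equivalent code of hull dimension $\ell$ for any $\ell \in \{0, \dots, h\}$, which is exactly the claim (and simultaneously reproves Theorem \ref{carletlcd}).

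Fix a generator matrix $G$ of $C$; by Lemma \ref{rankhull}, $\rank(GG^\top) = k - h$. Two $[n,k]_q$ codes are Hamming-equivalent precisely when their generator matrices differ by a monomial matrix $M$. For any monomial $M$, a direct computation gives $MM^\top = \mathrm{diag}(d_1^2, \dots, d_n^2)$ where the $d_i \in \mathbb{F}_q^*$ are the nonzero entries of $M$ (in row order). Hence it suffices to restrict to diagonal $D = \mathrm{diag}(a_1, \dots, a_n)$ with $a_i \in \mathbb{F}_q^*$, and to seek $D$ so that $\rank(GD^2G^\top) = k - h + 1$. I would try the simplest perturbation: set $a_s = c$ at a single coordinate $s$ and $a_i = 1$ elsewhere, giving the symmetric rank-one update
\[
GD^2 G^\top \;=\; GG^\top + (c^2 - 1)\, g^{(s)}(g^{(s)})^\top,
\]
where $g^{(s)}$ is the $s$-th column of $G$.

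Now I invoke the standard fact that for a symmetric matrix $A$, a vector $v$, and a scalar $\lambda \neq 0$, one has $\rank(A + \lambda v v^\top) = \rank(A) + 1$ exactly when $v \notin \mathrm{col}(A)$ (proof: $\ker(A + \lambda vv^\top) = \ker(A) \cap v^\perp$ in that case, and $\ker(A) \not\subseteq v^\perp$ since $v \notin \mathrm{col}(A) = \ker(A)^\perp$). So I want to pick $s$ and $c$ so that $g^{(s)} \notin \mathrm{col}(GG^\top)$ and $c^2 \neq 1$. The first condition is automatic: the column space $\mathrm{col}(G) = \mathbb{F}_q^k$ strictly contains the $(k-h)$-dimensional $\mathrm{col}(GG^\top)$, so some column of $G$ escapes it. The second condition is the only place where $q > 3$ enters: the equation $c^2 = 1$ has at most two roots in $\mathbb{F}_q^*$, while $|\mathbb{F}_q^*| \geq 3$ for $q > 3$, so a valid $c$ exists. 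With these choices $\rank(GD^2G^\top) = k - h + 1$, hence $\dim(H(C')) = h - 1$ by Lemma \ref{rankhull}.

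The main obstacle is really the arithmetic step $c^2 \neq 1$: for $q \in \{2, 3\}$ every nonzero square in $\mathbb{F}_q$ equals $1$, so the single-coordinate diagonal perturbation cannot change $GG^\top$ at all and the reduction breaks down completely. This is precisely why Theorem \ref{notestablished} is restricted to $q > 3$ and, by contrast, it is what makes the authors' rank-metric analogues in the sections below interesting: they succeed even for $q \in \{2, 3\}$, where the Hamming-metric statement fails.
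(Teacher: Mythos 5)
Your proof is correct. A point of context first: the paper does not actually prove Theorem~\ref{sanlinglcd} --- it is quoted as an adaptation of results from the cited literature --- so the natural comparison is with the paper's own proof of the rank-metric analogue, Theorem~\ref{main}. There the authors first normalize a generator matrix to the block form $\begin{bmatrix} I_h & A \\ \mathbf{0} & B\end{bmatrix}$, so that $GG^\top = \operatorname{diag}(\mathbf{0}, BB^\top)$, and then apply in one stroke a block matrix $X=\operatorname{diag}(Y, I_\ell)$ with $Y$ from Lemma~\ref{crux} chosen so that $YY^\top - I_{h-\ell}$ is invertible; for $q>3$ that $Y$ is diagonal, hence monomial, so the identical computation would also prove the Hamming-metric statement. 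Your route is genuinely different: you iterate a single-coordinate diagonal rescaling and analyze it through the symmetric rank-one update $\rank(A+\lambda vv^\top)=\rank(A)+1$ when $\lambda\neq 0$ and $v\notin\operatorname{col}(A)$, which is valid over any field because $\ker(A)^\perp=\operatorname{row}(A)=\operatorname{col}(A)$ for symmetric $A$ under the standard (possibly isotropic) bilinear form. What this buys you is freedom from any normal form: you never need to realize the hull as the top rows of a generator matrix, and the existence of a suitable column $g^{(s)}$ is automatic from $\rank(GG^\top)=k-h<k=\rank(G)$. What the paper's block computation buys is explicitness and the ability to pass to non-monomial $Y$ (the $Z_2$, $Z_3$ blocks) in the rank-metric setting, which is exactly what rescues the cases $q\in\{2,3\}$ there. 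Both arguments ultimately rest on the same arithmetic fact --- the existence of $c\in\mathbb{F}_q^\ast$ with $c^2\neq 1$ --- and your closing remark correctly isolates this as the obstruction for $q\le 3$.
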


Inspired by the proof of \cite[Theorem 6]{ling2024}, in this section, we show that the Euclidean hull dimension of vector rank-metric codes  can also be reduced under equivalence. 
We first prove a lemma. 

\begin{lemma}\label{crux}
Let $q$ be a prime power and $s \ge 1$. 
Define
\[
Z_2 =
\begin{pmatrix}
1 & 0\\[2pt]
1 & 1
\end{pmatrix},
\qquad
Z_3 =
\begin{pmatrix}
1 & 0 & 0\\[2pt]
1 & 1 & 0\\[2pt]
0 & 1 & 1
\end{pmatrix}.
\]
We define $Y \in \mathrm{GL}_s(\mathbb{F}_q)$  as follows:
\begin{enumerate}
\item If $q \in \{2,3\}$ and $s \ge 2$,  
then write $s = 2a + 3b$ with integers $a,b \ge 0$, and let
\[
Y = \operatorname{diag}(
\underbrace{Z_2, \dots, Z_2}_{a\ \mathrm{blocks}},
\underbrace{Z_3, \dots, Z_3}_{b\ \mathrm{blocks}}).
\]
\item If $q > 3$, then let
\[
Y = \operatorname{diag}(a_1, a_2, \dots, a_s),
\qquad 
a_i \in \mathbb{F}_q^{\ast}, \ \ a_i^2 \neq 1 \text{ for } 1 \le i \le s.
\]
\end{enumerate}
Then the matrix $(YY^{\top} - I_s)$ is invertible. 
\end{lemma}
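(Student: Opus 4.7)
The proof is essentially a direct computation, exploiting the block-diagonal structure of $Y$ so that $YY^\top - I_s$ decomposes into independent small blocks whose invertibility can be checked by hand.

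First I would handle Case (ii), which is the easy one. Since $Y = \operatorname{diag}(a_1,\dots,a_s)$ is diagonal, we have $YY^\top = \operatorname{diag}(a_1^2,\dots,a_s^2)$ and therefore
\[
YY^\top - I_s \;=\; \operatorname{diag}(a_1^2 - 1,\,\dots,\,a_s^2 - 1).
\]
By hypothesis $a_i^2 \neq 1$ for every $i$, so every diagonal entry is a nonzero element of $\mathbb{F}_q$, and invertibility follows.

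Case (i) is the one that deserves care. The plan is to note that $Y$ is block-diagonal, so $Y^\top$ and $YY^\top$ are block-diagonal with corresponding blocks, and therefore so is $YY^\top - I_s$. Concretely,
\[
YY^\top - I_s \;=\; \operatorname{diag}\bigl(\,\underbrace{Z_2Z_2^\top - I_2,\dots,Z_2Z_2^\top - I_2}_{a\ \text{blocks}},\,\underbrace{Z_3Z_3^\top - I_3,\dots,Z_3Z_3^\top - I_3}_{b\ \text{blocks}}\,\bigr).
\]
Hence it suffices to prove that each of the two fixed small matrices $Z_2Z_2^\top - I_2$ and $Z_3Z_3^\top - I_3$ is invertible over $\mathbb{F}_q$ for $q\in\{2,3\}$. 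I would carry out the two matrix products explicitly, subtract the identity, and compute the determinants. Both determinants come out to $-1$, which is nonzero in every field (including $\mathbb{F}_2$ and $\mathbb{F}_3$), so both blocks are invertible. The existence of the decomposition $s = 2a + 3b$ with $a,b \ge 0$ is available for every integer $s \ge 2$, which is why the hypothesis $s \ge 2$ in Case (i) is exactly what is needed.

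There is no real obstacle; the only thing to be careful about is not to accidentally require $-1 \neq 1$, i.e.\ the argument must go through in characteristic $2$ as well, which it does since we only need $-1 \neq 0$. Thus the structure of the proof is: (1) use block-diagonality to reduce to three independent invertibility checks (the two $Z$-blocks and, in Case (ii), each scalar $a_i^2-1$); (2) perform those checks by direct determinant computation.
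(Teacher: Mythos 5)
Your proposal is correct and follows essentially the same route as the paper: reduce via the block-diagonal structure to the invertibility of $Z_2Z_2^\top - I_2$ and $Z_3Z_3^\top - I_3$ (and of the scalars $a_i^2-1$ in the diagonal case), then verify by direct computation; your explicit observation that both small determinants equal $-1$, hence are nonzero even in characteristic $2$, is a slightly more detailed version of the paper's check.
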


\begin{proof}  The case $q>3$ can be verified easily with direct calculations, so here we will only prove the case $q \in \{2,3\}$. With the choice of $Z_2$, we   have
\[
Z_2 Z_2^\top - I_2
=
\begin{pmatrix}
0 & 1 \\
1 & 1
\end{pmatrix}.
\]
Also,
\[
Z_3Z_3^\top - I_3
=
\begin{pmatrix}
0&1&0\\
1&1&1\\
0&1&1
\end{pmatrix}.
\]
Then $Y \in GL_s(\mathbb{F}_q)$, and
\[
Y Y^\top - I_s \;=\; \operatorname{diag}\!\big(
Z_2 Z_2^\top - I_2, \dots, Z_2 Z_2^\top - I_2,\;
Z_3 Z_3^\top - I_3, \dots, Z_3 Z_3^\top - I_3
\big).
\]
Since both $Z_2 Z_2^\top - I_2$ and $Z_3 Z_3^\top - I_3$ are invertible
over $\mathbb{F}_2$ and $\mathbb{F}_3$, it follows that 
$Y Y^\top - I_s$ is invertible over $\mathbb{F}_2$ and $\mathbb{F}_3$ for all $s \ge 2$. 
\end{proof}

\begin{theorem} \label{main}
Let $q$ be a prime power and let $C$ be an $[n, k]_{q^m/q}$ vector rank-metric code with $\dim( H(C)) = h \ge 1$.
\begin{enumerate} 
\item If $q>3$, then there exists an $[n, k]_{q^m/q}$ code $C'$ equivalent to $C$ with
 $\dim(H(C')) = \ell$ for each $\ell \in \{0,1,\dots, h-1\}$. 

\item If $q \in \{2,3\}$ and $h \ge 2$, then there exists an  $[n, k]_{q^m/q}$ code $C'$ equivalent to $C$ with
 $\dim(H(C')) = \ell$ for each $\ell \in \{0,1,\dots, h-2\}$. 
 \end{enumerate}
\end{theorem}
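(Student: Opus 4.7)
The plan is to exploit Lemma~\ref{rankhull}: an equivalent rank-metric code $C'$ corresponds to a generator matrix of the form $GA$ with $A \in \mathrm{GL}_n(\mathbb{F}_q)$, so
\[
\dim H(C') \;=\; k - \rank\!\bigl(GAA^\top G^\top\bigr).
\]
To hit a specified $\ell$, I would set $s := h - \ell$ and aim to choose $A$ so that $\rank(GAA^\top G^\top) = (k-h) + s = k - \ell$. The construction is block-diagonal: pick a set $S \subseteq \{1,\dots,n\}$ of $s$ coordinates and let $A$ act as the matrix $Y$ from Lemma~\ref{crux} on those coordinates and as the identity elsewhere (case~(ii) of Lemma~\ref{crux} when $q>3$ and $s \geq 1$; case~(i) when $q\in\{2,3\}$ and $s\geq 2$). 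Writing $G_S$ for the $k \times s$ submatrix of $G$ with columns in $S$ and $N := YY^\top - I_s$, a direct expansion gives
\[
GAA^\top G^\top \;=\; GG^\top + G_S N G_S^\top,
\]
with $N \in \mathrm{GL}_s(\mathbb{F}_q)$ by Lemma~\ref{crux}.

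The key linear-algebra input I would invoke is the rank-additivity fact for symmetric matrices: for $M, P \in \mathbb{F}_{q^m}^{k\times k}$ symmetric with $\mathrm{Im}(M) \cap \mathrm{Im}(P) = \{0\}$, one has $\rank(M+P) = \rank(M) + \rank(P)$. This follows because, for symmetric $M$, $\ker M = \mathrm{Im}(M)^\perp$ with respect to the standard bilinear form, which forces $\ker(M+P) = \ker M \cap \ker P$ under the disjoint-image hypothesis; a dimension count then yields the formula. Applied to $M = GG^\top$ and $P = G_S N G_S^\top$, together with the observation that $\mathrm{Im}(G_S N G_S^\top)$ equals the column space of $G_S$ whenever $G_S$ has full column rank and $N$ is invertible, the problem reduces to selecting $S$ of size $s$ so that $G_S$ has column rank $s$ and $\mathrm{Im}(G_S) \cap \mathrm{Im}(GG^\top) = \{0\}$.

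Such $S$ exists whenever $s \leq h$: the columns of $G$ span $\mathbb{F}_{q^m}^k$ (since $G$ has rank $k$), so their images in the $h$-dimensional quotient $\mathbb{F}_{q^m}^k/\mathrm{Im}(GG^\top)$ span the quotient, and picking $s$ columns whose images there are linearly independent gives the desired $S$. With this choice, $\rank(GAA^\top G^\top) = (k-h)+s = k-\ell$, i.e.\ $\dim H(C') = \ell$, and every admissible $\ell$ is achieved in a single application of this construction: for $q>3$ with $\ell \in \{0,\dots,h-1\}$ take $s = h-\ell \in \{1,\dots,h\}$, and for $q\in\{2,3\}$ with $h\geq 2$ and $\ell \in \{0,\dots,h-2\}$ take $s = h-\ell \in \{2,\dots,h\}$, matching exactly the two cases of the theorem. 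The main obstacle will be cleanly isolating the symmetric-matrix rank-addition fact together with the identification $\mathrm{Im}(G_S N G_S^\top) = \mathrm{Im}(G_S)$; the column-selection step itself is a routine codimension argument once this linear-algebra framework is in place.
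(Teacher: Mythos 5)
Your proposal is correct, and the equivalence matrix you construct is the same one the paper uses: act by the $Y$ of Lemma~\ref{crux} on $s=h-\ell$ suitably chosen coordinates and by the identity elsewhere, with everything resting on the invertibility of $N=YY^\top-I_s$. What genuinely differs is how the rank of $GAA^\top G^\top$ is computed. The paper first normalizes: by row operations and a column permutation (both hull-preserving) it brings $G$ to the form $\left[\begin{smallmatrix} I_h & A\\ \mathbf{0} & B\end{smallmatrix}\right]$ with the first $h$ rows spanning $H(C)$, so that $GG^\top=\operatorname{diag}(\mathbf{0},BB^\top)$ and, after multiplying by $\operatorname{diag}(Y,I_\ell,I_{n-h})$, the Gram matrix becomes literally block diagonal with blocks $YY^\top-I_{h-\ell}$, $\mathbf{0}_\ell$ and $BB^\top$, making the rank count immediate. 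You instead keep $G$ arbitrary, expand $GAA^\top G^\top=GG^\top+G_S N G_S^\top$, and invoke rank-additivity for symmetric matrices with transversally intersecting images, together with a codimension argument in $\mathbb{F}_{q^m}^k/\mathrm{Im}(GG^\top)$ to produce $S$. Both routes are sound: your rank-additivity fact does hold over any field for the nondegenerate standard bilinear form (disjoint images give $\ker(M+P)=\ker M\cap\ker P$, and symmetry gives $\ker M+\ker P=(\mathrm{Im}(M)\cap\mathrm{Im}(P))^\perp=\mathbb{F}_{q^m}^k$), and $\mathrm{Im}(G_S N G_S^\top)=\mathrm{Im}(G_S)$ holds as you say; indeed your selected columns $G_S$ play exactly the role of the first $h-\ell$ columns of the paper's normalized $G$. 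The paper's normalization buys a completely explicit, elementary verification whose block form is reused in the proof of Theorem~\ref{main2}; your version buys coordinate-freeness (no need to argue the normalization is harmless) at the price of two auxiliary linear-algebra lemmas that would need to be stated and proved carefully in a final write-up.
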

\begin{proof} Since $H(C)$ is a  $\mathbb{F}_{q^m}$-subspace of $C$, we can regard $H(C)$ as an $[n, h]_{q^m/q}$ vector rank-metric code with generator matrix $G_0$. 
By a suitable change of basis and permutations
of the columns we get an equivalent code, where we can choose a generator
matrix $G_0$ of the form $G_0 = [\, I_h \; A \,]$.

1. Let $G$ be a generator matrix of $C$. Under suitable row operations on $G$, we can
 then assume that $G$ is of the form
\[
G \;=\;
\begin{bmatrix}
I_h & A \\[6pt]
 \mathbf{0}  & B
\end{bmatrix},
\]
where $I_h$ is the $h\times h$ identity matrix, $A\in \mathbb{F}_{q^m}^{h\times (n-h)}$,
and $B\in \mathbb{F}_{q^m}^{(k-h)\times (n-h)}$.
Let $\mathbf{g}_1, \dots, \mathbf{g}_k$ be the rows of $G$. Since $\mathbf{g}_i \in H(C)$ for $1 \le i \le h$, we have that $\langle \mathbf{g}_i,\mathbf{g}_j \rangle=0$ if $1 \le i \le h$ or $1 \le j \le h$. 
Hence, $GG^\top$ has the form
\[ GG^\top=
  \begin{bmatrix}
    \mathbf{0}  & \mathbf{0}  \\
    \mathbf{0} & BB^\top
  \end{bmatrix}.
\]
In particular, we have the following conditions
\begin{equation}\label{hullcond}
A A^\top + I_h = \mathbf{0}, 
\qquad 
A B^\top = \mathbf{0}.
\end{equation}
Also, from Lemma \ref{rankhull} we have $\text{rank}(BB^\top)= k-h$. 



2. Fix $\ell \in \{0, \dots, h-1\}$ if $q>3$, or $\ell \in \{0, \dots, h-2\}$ if $q\in \{2,3\}$   and $h \ge 2$. Let $Y  \in \mathrm{GL}_{h-\ell}(\mathbb{F}_q)$ be defined as in Lemma \ref{crux}. Then $YY^T-I_{h-\ell}$ is full-rank. Let 
\[
X = 
\begin{bmatrix}
Y & \mathbf{0}\\[4pt]
\mathbf{0} & I_{\ell}
\end{bmatrix}
\in \mathrm{GL}_{h}(\mathbb{F}_q),
\quad \text{and} \quad
M =
\begin{bmatrix}
X & \mathbf{0}\\[6pt]
\mathbf{0} & I_{n-h}
\end{bmatrix}
\in \mathrm{GL}_{n}(\mathbb{F}_q).
\]
Then
\begin{equation} \label{hullY}
X X^\top \;=\;
\begin{bmatrix}
Y Y^\top & \mathbf{0}\\[4pt]
\mathbf{0} & I_{\ell}
\end{bmatrix}.
\end{equation}
Let $G'=GM.$ From  conditions \eqref{hullcond} and \eqref{hullY}, we have

\begin{equation} \label{hullend}
G' G'^\top \;=\;
\begin{bmatrix}
Y Y^\top - I_{h-\ell} & \mathbf{0} & \mathbf{0} \\[4pt]
\mathbf{0} & \mathbf{0}_{\ell} & \mathbf{0} \\[4pt]
\mathbf{0} & \mathbf{0} & B B^\top
\end{bmatrix}.
\end{equation}
From the block form, we obtain
\[
\rank \big(G' G'^\top\big)
= \rank \big( Y Y^\top - I_{h-\ell} \big) \;+\; (k-h) = (h-\ell)+(k-h) = k-\ell,
\]
and so 
$
\dim(H(C')) = \ell,
$
as desired.
\end{proof}

\begin{example} 
We demonstrate with an example where $q=2$, $m=2$, and $n=4$. Let $\mathbb{F}_{4}=\{0,1,\omega,\omega^2\}$ with $\omega^2=\omega+1$. Let $C$ be an $[4,2]_{2^2/2}$ vector rank-metric code with  generator matrix  
\[
G=\bigl[\,I_2\ \ A\,\bigr]
=\begin{bmatrix}
1&0&\omega&\omega^2\\[2pt]
0&1&\omega^2&\omega
\end{bmatrix},\qquad
A=\begin{bmatrix}\omega&\omega^2\\[2pt]\omega^2&\omega\end{bmatrix}\in\mathbb F_4^{2\times 2}.
\]
Then
\[
A A^\top=
\begin{bmatrix}
\omega&\omega^2\\[2pt]\omega^2&\omega
\end{bmatrix}
\begin{bmatrix}
\omega&\omega^2\\[2pt]\omega^2&\omega
\end{bmatrix}^{\!\top}
=
\begin{bmatrix}
\omega^2+( \omega^2)^2 & \omega\omega^2+\omega^2\omega\\[2pt]
\omega^2\omega+\omega\omega^2 & (\omega^2)^2+\omega^2
\end{bmatrix}
=
\begin{bmatrix}
1&0\\[2pt]0&1
\end{bmatrix}
=I_2.
\]
 Hence $\dim (H(C))=2$. To apply Theorem~3.4 and reduce the hull to $\ell=0$, 
we choose from Lemma \ref{crux} the matrix
\[
Y = Z_2 =
\begin{bmatrix}1&0\\[2pt]1&1\end{bmatrix}
\in \text{GL}_2(\mathbb F_2),
\qquad
Y Y^\top - I_2 =
\begin{bmatrix}0&1\\[2pt]1&1\end{bmatrix}.
\]
Let
\[
M 
= \operatorname{diag}(Y,I_2)
= 
\begin{bmatrix}
1&0&0&0\\[2pt]
1&1&0&0\\[2pt]
0&0&1&0\\[2pt]
0&0&0&1
\end{bmatrix}\in \text{GL}_4(\mathbb F_2).
\]

\medskip
\noindent
Then 
\[
G' = G M
   = \bigl[\,I_2\ \ A\,\bigr]
     \operatorname{diag}(Y,I_2)
   = \bigl[\,Y\ \ A\,\bigr]
   = \begin{bmatrix}
       1 & 0 & \omega & \omega^2\\[2pt]
       1 & 1 & \omega^2 & \omega
     \end{bmatrix}.
\]

\medskip
\noindent
It follows that 
\[
G'G'^{\top} = Y Y^{\top} - I_2
= \begin{bmatrix}0&1\\[2pt]1&1\end{bmatrix},
\]
which is invertible over $\mathbb F_2$. 
Hence $\operatorname{rank}(G'G'^{\top})=2$ and with $C'$ the code generated by $G'$, we have
\[
\dim (H(C')) = k - \operatorname{rank}(G'G'^{\top}) = 2 - 2 = 0.
\]
\end{example}

The above construction provides a general method for varying the hull dimension of a vector rank-metric code $C$.
However, for the  case when $q \in \{2,3\}$ and $\dim(H(C))=h \ge 1$, this approach does not allow us to find an equivalent code $C'$ with $\dim(H(C'))=h-1$. We do not know a general answer, but the special case $h=1$ will be treated in the next theorem.

\begin{theorem} \label{main2} Let $q \in \{2,3\}$ and let $C$ be an $[n, k]_{q^m/q}$ vector rank-metric code with $\dim( H(C)) = 1$.  Then there exists an  $[n, k]_{q^m/q}$ vector rank-metric code $C'$  which is LCD and equivalent to $C$. 
\end{theorem}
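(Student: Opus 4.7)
The plan is to reduce to standard form, apply a rank-one column operation that modifies only the first row of the generator matrix, and then show by a simple non-vanishing argument on a low-degree polynomial that some choice of parameter always yields an LCD code.

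First I would reduce (exactly as in the proof of Theorem~\ref{main}) to an equivalent generator
\[
G = \begin{bmatrix} 1 & A \\ 0 & B \end{bmatrix}
\]
with $AA^\top = -1$, $AB^\top = 0$ and $BB^\top$ invertible; in particular $A \neq 0$. Then I would take $M = I_n + e_1 u^\top$ with $u = (0, \tilde u)^\top$ and $\tilde u \in \mathbb{F}_q^{n-1}$ to be chosen: since $\det M = 1 + u_1 = 1$, one has $M \in \mathrm{GL}_n(\mathbb{F}_q)$, and this operation sends $g_1$ to $g_1 + u$ while leaving $g_2, \dots, g_k$ fixed. A direct computation using $AA^\top = -1$ and $AB^\top = 0$ yields
\[
GM(GM)^\top = \begin{bmatrix} 2A\tilde u + \tilde u^\top \tilde u & (B\tilde u)^\top \\ B\tilde u & BB^\top \end{bmatrix},
\]
so by Schur complement its determinant equals $\det(BB^\top) \cdot F(\tilde u)$, where $F(\tilde u) = 2A\tilde u + \tilde u^\top (I - \Pi)\tilde u$ and $\Pi = B^\top (BB^\top)^{-1} B$ is the orthogonal projection onto the row space of $B$ in $\mathbb{F}_{q^m}^{n-1}$. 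Hence $CM$ is LCD iff $F$ takes some nonzero value on $\mathbb{F}_q^{n-1}$.

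The case $q = 3$ is easy: if $F$ vanished on all of $\mathbb{F}_3^{n-1}$, the evaluations at $e_j$, $2e_j$ and $e_j + e_l$ force both the linear coefficient $2A$ and the symmetric matrix $I - \Pi$ to vanish; the first gives $A = 0$, contradicting $AA^\top = -1$, and the second gives $\Pi = I$, forcing $k = n$ and contradicting $\dim H(C) = 1$.

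The main obstacle is the case $q = 2$. There the quadratic form collapses (because $\tilde u_j^2 = \tilde u_j$ and $2 = 0$) to the $\mathbb{F}_2$-linear function $\tilde u \mapsto \sum_j (1 + \Pi_{jj})\tilde u_j$, so $F \equiv 0$ iff $\Pi_{jj} = 1$ for every $j$. To rule this out I would use the identity $\|(I - \Pi)e_j\|^2 = 1 - \Pi_{jj}$, which holds because $I - \Pi$ is a symmetric idempotent: if every $\Pi_{jj} = 1$, then each $(I - \Pi)e_j$ is isotropic, and in characteristic $2$ the isotropic vectors form an $\mathbb{F}_{q^m}$-linear subspace because $\langle u+v, u+v\rangle = \langle u,u\rangle + \langle v,v\rangle$. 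Since the vectors $(I - \Pi)e_j$ span $\mathrm{rowspace}(B)^\perp$, the whole space $\mathrm{rowspace}(B)^\perp$ would be totally isotropic, contradicting $A \in \mathrm{rowspace}(B)^\perp$ with $AA^\top = 1 \neq 0$.
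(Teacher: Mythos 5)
Your proposal is correct and follows essentially the same route as the paper: the same reduction to $G=\bigl[\begin{smallmatrix}1 & A\\ 0 & B\end{smallmatrix}\bigr]$, the same column operation adding a vector of $\mathbb{F}_q^{\,n-1}$ to $A$, the same Schur-complement reduction to the non-vanishing of $f(v)=vQv^\top+2vA^\top$ with $Q=I-B^\top(BB^\top)^{-1}B$, and the same case split on $q$. The only differences lie in the final non-vanishing step and are equivalent in substance: for $q=3$ the paper uses the scaling identity $f(\alpha v)=\alpha(\alpha-1)\,vQv^\top$ rather than your evaluation at $e_j$ and $2e_j$, and for $q=2$ it directly computes $AQA^\top=AA^\top=1$ to exhibit a nonzero diagonal entry of $Q$, which is the contrapositive of your totally-isotropic-subspace argument.
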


\begin{proof}  Let $G$ be a generator matrix of $C$. Under suitable row operations on $G$, we can assume that $G$ is of the form
\[
G \;=\;
\begin{bmatrix}
1 & A \\[6pt]
 \mathbf{0}  & B
\end{bmatrix},
\]
where $A\in \mathbb{F}_{q^m}^{1\times (n-1)}$, 
and $B\in \mathbb{F}_{q^m}^{(k-1)\times (n-1)}$.
We have the following conditions
\begin{equation}\label{hullcond2}
AA^{\top} + 1 = 0, 
\qquad  
AB^{\top}=0.
\end{equation}
From the proof of Theorem \ref{main}, one can show that $GG^\top$ has the form
\[ GG^\top=
  \begin{bmatrix}
    0  & \mathbf{0}  \\
    \mathbf{0} & BB^\top
  \end{bmatrix}.
\]
Also, $\text{rank}(BB^\top)= k-1$.

1. Consider the transformation 
\[
M =
\begin{bmatrix}
1 & v \\[2pt]
 \mathbf{0} & I_{n-1}
\end{bmatrix}
\in \text{GL}_n(\mathbb{F}_q),
\qquad
v\in \mathbb{F}_q^{1\times (n-1)}.
\]
Then
\[
G' = G M =
\begin{bmatrix}
1 &  A+v\\[2pt]
 \mathbf{0} & B
\end{bmatrix}.
\]
Using the conditions \eqref{hullcond2}, the matrix $G'G'^{\top}$ has the form
\[
G'G'^{\top} =
\begin{bmatrix}
\theta &   vB^{\top}\\ 
Bv^{\top}  & BB^{\top}
\end{bmatrix},
\qquad
\]
where
$
\theta = vv^{\top} + 2vA^\top. 
$

2. We now consider the kernel of $G'G'^\top $. Let $S:=BB^{\top}$ and we note that $S$ is invertible. Solving
\[
G'G'^{\top}
\begin{pmatrix}
x\\ y
\end{pmatrix}
= 0
\]
yields $y = -S^{-1}Bv^{\top}x$, and substituting back gives
\[
(\theta - vB^{\top}S^{-1}Bv^{\top})x = 0.
\]
If $\theta - vB^{\top}S^{-1}Bv^{\top}\neq 0$, then $x=0$, and so $y=0$. This implies that $\operatorname{nullity}(G'G'^{\top}) = 0$. Consequently $\operatorname{rank}(G'G'^{\top})=n$ and $\dim H(C')=0$.
In the case $\theta - vB^{\top}S^{-1}Bv^{\top}=0$, we obtain  $\dim (H(C'))=1$ instead. 
Therefore,
\[
\dim (H(C')) = 
\begin{cases}
0, & \text{if } \theta - vB^{\top}S^{-1}Bv^{\top} \neq 0,\\[4pt]
1,   & \text{if } \theta - vB^{\top}S^{-1}Bv^{\top} = 0.
\end{cases}
\]
3. Let $P:=B^{\top}S^{-1}B=B^{\top}(BB^\top)^{-1}B$.  
 Then 
\[
\theta - vB^{\top}S^{-1}Bv^{\top}  =  vv^{\top} + 2vA^{\top}- vPv^{\top} = v(I_{n-1} - P)v^\top +2vA^{\top}.
\]
Let $Q:=I_{n-1}-P$. It is easy to check that $Q$ is symmetric.
Here, we note that $\rank(P)=\rank(B)=k-1$. This implies that  the matrix $Q$ is nonzero.
Define
\[
f(v) :=  vQv^\top +2vA^{\top}.
\]
In the remainder of the proof, we construct a vector $v   \in\F_q^{\,n-1}$ such that $f(v) \ne 0$. 
There are two cases depends on  $q$.


 
\underline{Case $q=2$}. In this case, $f(v)$ becomes the nontrivial quadratic form
\[
f(v) = v Q v^\top
\]
over $\mathbb{F}_{q^m}$. For any 
$
x = (x_1, x_2, \ldots, x_{n-1}) \in \mathbb{F}_{q^m}^{\,n-1},
$
we have
\[
x Q x^\top = \sum_{i=1}^{n-1} q_{ii} x_i^2.
\]
 Since $ A B^\top = 0$, we have
\[
A P A^\top 
 = A\,B^\top S^{-1} B\,A^\top 
 = (A B^\top)\,S^{-1}\,(B A^\top) 
 = 0,
\]
and so
\[
A Q A^\top 
= A(I - P)A^\top 
= A A^\top - A P A^\top 
= A A^\top 
= 1.
\] 
The condition $A Q A^\top =1$ implies that some diagonal entry $q_{ii}$ is nonzero. Choosing $v = e_i \in \mathbb{F}_q^{\,n-1}$ gives
\[
f(v) = v Q v^\top = q_{ii} \neq 0.
\]

 \underline{Case $q=3$}. The condition $AA^\top+1=0$ implies that $A \ne \mathbf{0}.$ Let $v \in \mathbb{F}_q^{n-1}$ be such that $vA^{\top}\neq 0$.
If $f(v)\neq 0$, then we are done. Otherwise $f(v)=0$ implies
\[
2\,vA^{\top}=-vQv^{\top}\neq 0.
\]
For any $\alpha\in\F_q$,
\[
f(\alpha v)=\alpha^2\,vQv^{\top}+2\alpha v A^{\top}
=\alpha^2\,vQv^{\top}-\alpha vQv^{\top}
=\alpha(\alpha-1)\,(vQv^{\top}).
\]
Choose $\alpha=-1\in\F_q\setminus\{0,1\}$, then the vector $\alpha v$ is in $\mathbb{F}_q^{n-1}$ and satisfies $f(\alpha v)\neq 0$.

In all cases, there exists $v$ such that $f(v)\neq 0$, and from part 2 of the proof, we obtain $\dim(H(C')) = 0$. 
\end{proof}

\begin{example}  
Let $q=2$ and $m=2$. Let $\mathbb{F}_4 = \{0,1,\omega,\omega^2\}$ with $\omega^2 = \omega + 1$.
Consider the $[4,2]_{2^2/2}$ vector rank-metric code $C$ with dimension $k=2$ generated by
\[
G =
\begin{bmatrix}
1 & \omega & 0 & \omega^2\\[4pt]
0 & \omega & 0 & 1
\end{bmatrix}.
\]
1. We  rewrite
\[
G =
\begin{bmatrix}
1 & A\\[2pt]
0 & B
\end{bmatrix}, \qquad
A = [\,\omega\ \ 0\ \ \omega^2\,], \quad
B = [\,\omega\ \ 0\ \ 1\,].
\] 
Here, $AA^{\top} + 1 = 0$, $AB^{\top} = 0$,
\[
GG^{\top} =
\begin{bmatrix}
0 & 0\\[4pt]
0 & BB^{\top}
\end{bmatrix}, \qquad
BB^{\top} = \omega \neq 0.
\]
Hence,
\[
\operatorname{rank}(GG^{\top}) = 1, \qquad \dim (H(C)) = k - \operatorname{rank}(GG^{\top}) = 2 - 1 = 1.
\]
2.  Let 
\[
M =
\begin{bmatrix}
1 & v\\[2pt]
0 & I_3
\end{bmatrix}
=
\begin{bmatrix}
1 & 1 & 0 & 0\\[2pt]
0 & 1 & 0 & 0\\[2pt]
0 & 0 & 1 & 0\\[2pt]
0 & 0 & 0 & 1
\end{bmatrix}, \qquad
v = [\,1\ 0\ 0\,] \in \mathbb{F}_2^{1\times 3}.
\]
Let $S = BB^{\top} =\omega.
$
Then $S^{-1} = \omega^2$. We then compute
\[
P = B^{\top}\omega^2 B =
\begin{bmatrix}
\omega & 0 & 1\\[2pt]
0 & 0 & 0\\[2pt]
1 & 0 & \omega^2
\end{bmatrix}, \qquad
Q = I_3 - P =
\begin{bmatrix}
1+\omega & 0 & 1\\[2pt]
0 & 1 & 0\\[2pt]
1 & 0 & 1+\omega^2
\end{bmatrix}.
\]
 From Theorem \ref{main2}, when $q = 2$, we define
$
f(v) = vQv^{\top}.
$
Note that each diagonal entry of $Q$ is nonzero in this case. We simply choose $v=e_1$, so that 
$
f(v) = vQv^{\top} 
 = 1 + \omega \neq 0.
$

3. 
The new generator matrix is
\[
G' = G\,M =
\begin{bmatrix}
1 & \omega^2 & 0 & \omega^2\\[4pt]
0 & \omega & 0 & 1
\end{bmatrix}.
\] 
Compute
\[
G'G'^{\top} =
\begin{bmatrix}
1 & \omega\\[4pt]
\omega & \omega
\end{bmatrix},
\qquad
\det(G'G'^{\top}) = \omega + \omega^2 = 1 \neq 0.
\] 
Hence
$
\operatorname{rank}(G'G'^{\top}) = 2.
$ 
It follows that the   code $C'$ defined by $G'$ is LCD and equivalent to $C$. 
\end{example}

From Theorems \ref{main} and \ref{main2}, we obtain the following corollary.

\begin{corollary} Let $q$ be a prime power and let $C$ be an $[n, k]_{q^m/q}$ vector rank-metric code.  Then there exists an $[n, k]_{q^m/q}$ vector rank-metric code $C'$ which is LCD and equivalent to $C$.
\end{corollary}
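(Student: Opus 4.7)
The plan is to deduce the corollary by a short case analysis on the ground field size $q$ and on the hull dimension $h := \dim(H(C))$, invoking Theorems \ref{main} and \ref{main2} in each case. The key observation is that LCD corresponds exactly to the target hull dimension $\ell = 0$ in the earlier theorems, so the work reduces to verifying that either Theorem \ref{main} or Theorem \ref{main2} applies to each pair $(q,h)$.

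Concretely, I would split into four cases. First, if $h = 0$, then $C$ is already LCD and $C' := C$ works. Second, if $q > 3$ and $h \ge 1$, Theorem \ref{main}(1) furnishes, for each $\ell \in \{0, 1, \dots, h-1\}$, an equivalent code with hull dimension exactly $\ell$; the choice $\ell = 0$ produces an equivalent LCD code. Third, if $q \in \{2,3\}$ and $h \ge 2$, Theorem \ref{main}(2) furnishes, for each $\ell \in \{0, 1, \dots, h-2\}$, an equivalent code with hull dimension $\ell$, and again $\ell = 0$ delivers the desired LCD code. Fourth, in the remaining case $q \in \{2,3\}$ with $h = 1$, Theorem \ref{main}(2) does not apply, but Theorem \ref{main2} directly constructs an equivalent LCD code.

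The only thing to check is that these four cases are exhaustive: for every prime power $q$, either $q > 3$ or $q \in \{2,3\}$, and for each $q$ we have covered every possible value of $h \in \{0, 1, 2, \dots, k\}$. Since equivalence is a transitive relation on codes of the same parameters, the code $C'$ produced in each case is automatically an $[n,k]_{q^m/q}$ vector rank-metric code equivalent to $C$.

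There is no real obstacle here; the difficulty was already absorbed into Theorems \ref{main} and \ref{main2}. The only subtle point worth mentioning in the write-up is precisely why Theorem \ref{main2} is needed as a separate ingredient, namely that the block construction of Theorem \ref{main} via $\operatorname{diag}(Z_2,\dots,Z_2,Z_3,\dots,Z_3)$ requires $s \ge 2$ when $q \in \{2,3\}$, and hence cannot treat the edge case $h = 1$ over $\mathbb{F}_2$ or $\mathbb{F}_3$ by itself.
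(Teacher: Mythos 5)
Your proposal is correct and matches the paper exactly: the paper derives this corollary directly from Theorems \ref{main} and \ref{main2} by precisely the case split you describe ($h=0$ trivially; $\ell=0$ in Theorem \ref{main} for $q>3$ or for $q\in\{2,3\}$ with $h\ge 2$; Theorem \ref{main2} for the edge case $q\in\{2,3\}$, $h=1$). Your remark on why Theorem \ref{main2} is indispensable — the block construction in Lemma \ref{crux} needs $s\ge 2$ when $q\in\{2,3\}$ — is also the paper's stated motivation for that theorem.
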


\section{On the hull of codes associated with vector rank-metric codes}

In this section, we assume the additional condition that $q$ is even or  both \( q \) and \( m \) are odd. Then there exists a self-dual basis   of $\mathbb{F}_{q^m}$ over $\mathbb{F}_q$, see Subsection 2.6. 
From Theorem~\ref{main}, we have the following corollary for matrix codes  associated with vector rank-metric codes. 
 
\begin{corollary} \label{cor1} Let $q$ be  even or both \( q \) and \( m \) are odd. Let $\mathcal{G}$ be a self-dual basis of $\mathbb{F}_{q^m}$ over $\mathbb{F}_q$.
Let $C$ be an $[n, k]_{q^m/q}$ vector rank-metric code with $\dim_{\mathbb{F}_{q^m}}( H(C)) = h \ge 1$. 
Let $D:=\mathcal{C}_\mathcal{G}(C)$ be the matrix code associated with $C$. 

\begin{enumerate} 
\item If $q>3$, then there exists a matrix code $D'$ equivalent to $D$ with
 $\dim_{\mathbb{F}_{q}}(H(D')) = m\ell$ for each $\ell \in \{0,1,\dots, h-1\}$. 

\item If $q \in \{2,3\}$ and $h \ge 2$, then there exists a matrix code $D'$ equivalent to $D$ with
 $\dim_{\mathbb{F}_{q}}(H(D')) = m\ell$ for each $\ell \in \{0,1,\dots, h-2\}$. 

\item If $q \in \{2,3\}$ and $h =1$, then there exists a matrix code $D'$ which is LCD and equivalent to $D$.  
 \end{enumerate} 
\end{corollary}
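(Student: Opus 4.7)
The plan is to reduce this corollary directly to Theorems~\ref{main} and~\ref{main2} via the transfer lemmas from Section~2. The crucial point is that under the self-dual basis hypothesis, Lemma~\ref{hulltransfer} lets us carry hull information from the vector rank-metric side to the matrix code side without loss, and Lemma~\ref{equivalencetransfer} ensures that equivalence is preserved under $\mathcal{C}_\mathcal{G}(\cdot)$.

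First I would invoke Theorem~\ref{main} (in parts (1) and (2)) or Theorem~\ref{main2} (in part (3)) to produce an $[n,k]_{q^m/q}$ vector rank-metric code $C'$ equivalent to $C$ with $\dim_{\mathbb{F}_{q^m}}(H(C')) = \ell$, where $\ell$ ranges over $\{0,\dots,h-1\}$, $\{0,\dots,h-2\}$, or $\{0\}$, according to the three cases. Set $D' := \mathcal{C}_\mathcal{G}(C')$. By Lemma~\ref{equivalencetransfer}, $D'$ is equivalent to $D = \mathcal{C}_\mathcal{G}(C)$ as matrix codes.

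Next, using the self-dual basis hypothesis together with Lemma~\ref{hulltransfer}, I would write
\[
H(D') \;=\; \mathcal{C}_\mathcal{G}(C') \cap \mathcal{C}_\mathcal{G}(C')^\perp \;=\; \mathcal{C}_\mathcal{G}\bigl(C' \cap C'^\perp\bigr) \;=\; \mathcal{C}_\mathcal{G}\bigl(H(C')\bigr).
\]
Since $\mathcal{C}_\mathcal{G}$ is an $\mathbb{F}_q$-linear isomorphism between $\mathbb{F}_{q^m}^n$ and $\mathbb{F}_q^{n\times m}$, and since any $\mathbb{F}_{q^m}$-subspace of $\mathbb{F}_{q^m}^n$ of $\mathbb{F}_{q^m}$-dimension $\ell$ has $\mathbb{F}_q$-dimension $m\ell$, we obtain
\[
\dim_{\mathbb{F}_q}\bigl(H(D')\bigr) \;=\; \dim_{\mathbb{F}_q}\bigl(\mathcal{C}_\mathcal{G}(H(C'))\bigr) \;=\; m\cdot\dim_{\mathbb{F}_{q^m}}\bigl(H(C')\bigr) \;=\; m\ell.
\]
Parts (1) and (2) follow immediately. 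For part (3), choosing $\ell=0$ in Theorem~\ref{main2} gives $H(D') = \mathcal{C}_\mathcal{G}(\{0\}) = \{0\}$, so $D'$ is LCD.

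There is no real obstacle here; the only subtle point worth checking is that the hypothesis on $q$ and $m$ (ensuring the existence of a self-dual basis) is precisely what is needed to apply Lemma~\ref{hulltransfer}, and that the $\mathbb{F}_q$-dimension count is compatible with the $\mathbb{F}_{q^m}$-dimension count on the hull—both of which are standard. The rest is a clean composition of the theorems of Section~3 with the transfer lemmas already established.
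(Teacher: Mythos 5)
Your proposal is correct and follows essentially the same route as the paper's own proof: apply Theorems~\ref{main} and~\ref{main2} to obtain $C'$ with the desired hull dimension, transfer equivalence via Lemma~\ref{equivalencetransfer}, transfer the hull via Lemma~\ref{hulltransfer}, and count $\mathbb{F}_q$-dimensions by the factor $m$. Your explicit justification of the dimension count via the $\mathbb{F}_q$-linear isomorphism $\mathcal{C}_\mathcal{G}$ is a slightly more careful phrasing of the step the paper states without comment, but the argument is identical in substance.
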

 
 \begin{proof} Fix $\ell$ as follows: 
$\ell \in \{0, \dots, h-1\}$ if $q > 3$; 
$\ell \in \{0, \dots, h-2\}$ if $q \in \{2,3\}$ and $h \ge 2$; 
and $\ell = 0$ if $q \in \{2,3\}$ and $h = 1$. 
By Theorems \ref{main} and \ref{main2}, there exists a vector rank-metric code $C'$ equivalent to $C$ with $\dim_{\mathbb{F}_{q^m}}(H(C'))=\ell$. 
Let $D':=\mathcal{C}_\mathcal{G}(C')$ be the matrix code associated with $C'$.
Then by Lemma \ref{hulltransfer}, we have that  
\[
H(D') = \mathcal{C}_\mathcal{G}(C') \cap \mathcal{C}_\mathcal{G}(C')^\perp=\mathcal{C}_\mathcal{G}(C \cap C^\perp) = \mathcal{C}_\mathcal{G}(H(C')),
\]
and so
\[
 \dim_{\mathbb{F}_{q}}(H(D')) = m \cdot \dim_{\mathbb{F}_{q^m}} (H(C')) = ml.
\]
Furthermore, $D'$ is equivalent to $D$  by Lemma \ref{equivalencetransfer}, and this completes the proof.
\end{proof}
 


We now consider the hull-variation problem for extended block codes associated with vector rank-metric codes. 
\begin{lemma} \label{hulltransferblock} Let $C$ and $C'$ be two matrix codes in $\mathbb{F}_q^{n \times m}$. Then $\rho(C \cap C') = \rho(C) \cap \rho(C')$. 
\end{lemma}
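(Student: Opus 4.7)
The plan is to exploit the fact that the flattening map $\rho:\mathbb{F}_q^{n\times m}\to \mathbb{F}_q^{nm}$ is an $\mathbb{F}_q$-linear bijection (indeed a vector space isomorphism, since it merely rearranges the $nm$ coordinates of a matrix into a single row). For any injective map, the image of an intersection equals the intersection of the images, so the statement reduces to this general fact applied to $\rho$. There is no serious obstacle here; the lemma is essentially a bookkeeping result, and the only thing worth checking carefully is that injectivity (not just linearity) is used in one of the two inclusions.

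Concretely, I would prove both containments separately. For the inclusion $\rho(C\cap C')\subseteq \rho(C)\cap \rho(C')$, take any $X\in C\cap C'$; then $\rho(X)\in \rho(C)$ by definition of $\rho(C)$, and similarly $\rho(X)\in \rho(C')$, whence $\rho(X)\in \rho(C)\cap \rho(C')$. This direction uses only that $\rho$ is a function.

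For the reverse inclusion $\rho(C)\cap \rho(C')\subseteq \rho(C\cap C')$, let $v\in \rho(C)\cap \rho(C')$. By definition of $\rho(C)$ and $\rho(C')$ there exist $X\in C$ and $X'\in C'$ such that $v=\rho(X)=\rho(X')$. Since $\rho$ is injective, we conclude $X=X'$, and therefore this common matrix lies in both $C$ and $C'$, i.e.\ in $C\cap C'$. Hence $v=\rho(X)\in \rho(C\cap C')$. Combining the two inclusions yields the claimed equality. The proof is a few lines and does not rely on any structure beyond the $\mathbb{F}_q$-linear isomorphism nature of $\rho$.
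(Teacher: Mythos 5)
Your proof is correct and follows essentially the same route as the paper: the forward inclusion is immediate, and the reverse inclusion rests on the injectivity of $\rho$ (the paper phrases this by applying $\rho^{-1}$ to the common element, which is the same argument). Your version is just slightly more explicit about where injectivity is used.
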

\begin{proof} It can be readily checked that $\rho(C \cap C') \subseteq \rho(C) \cap \rho(C')$. For the converse, let $c \in \rho(C) \cap \rho(C')$. Then $\rho^{-1}(c) \in C$ and  $\rho^{-1}(c) \in  C'$, so that $ \rho^{-1}(c) \in C \cap C'$. Then, $c \in   \rho(C \cap C')$.  
\end{proof}

 \begin{corollary}Let $q$  be  even or both \( q \) and \( m \) are odd. Let $C$ be an $[n, k]_{q^m/q}$ vector rank-metric code with $\dim_{\mathbb{F}_{q^m}}( H(C)) = h$. Let $D:=\rho(\mathcal{C}_\mathcal{G}(C))$ be the extended block code  associated with $C$. 
 
 \begin{enumerate} 
\item If $q>3$, then there exists an extended block code $D'$ which is linearly $n \times m$ matrix-equivalent to $D$ with
 $\dim_{\mathbb{F}_{q}}(H(D')) = m\ell$ for each $\ell \in \{0,1,\dots, h-1\}$. 

\item If $q \in \{2,3\}$ and $h \ge 2$, there exists an extended block code $D'$ which is linearly $n \times m$ matrix-equivalent to $D$ with
 $\dim_{\mathbb{F}_{q}}(H(D')) = m\ell$ for each $\ell \in \{0,1,\dots, h-2\}$. 

 \item If $q \in \{2,3\}$ and $h =1$, then there exists an extended block code $D'$ which is LCD and linearly $n \times m$ matrix-equivalent to $D$.
 \end{enumerate}

 \end{corollary}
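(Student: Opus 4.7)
The plan is to mirror the argument of Corollary \ref{cor1}, with one additional layer: translating the hull identity for matrix codes through the flattening map $\rho$ to obtain it for extended block codes. First I would fix $\ell$ according to the three cases stated: $\ell \in \{0,\dots,h-1\}$ when $q>3$; $\ell \in \{0,\dots,h-2\}$ when $q\in\{2,3\}$ and $h\ge 2$; and $\ell=0$ when $q\in\{2,3\}$ and $h=1$. By Theorems \ref{main} and \ref{main2}, I can then produce a vector rank-metric code $C'$ equivalent to $C$ with $\dim_{\mathbb{F}_{q^m}}(H(C'))=\ell$, and define the candidate $D' := \rho(\mathcal{C}_\mathcal{G}(C'))$.

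The core of the proof is a short chain of identifications for $H(D')$. Since $D'^\perp = \rho(\mathcal{C}_\mathcal{G}(C'))^\perp = \rho(\mathcal{C}_\mathcal{G}(C')^\perp)$ by equation \eqref{rhocom}, Lemma \ref{hulltransferblock} applied to the two matrix codes $\mathcal{C}_\mathcal{G}(C')$ and $\mathcal{C}_\mathcal{G}(C')^\perp$ gives
\[
H(D') \;=\; D' \cap D'^\perp \;=\; \rho\bigl(\mathcal{C}_\mathcal{G}(C') \cap \mathcal{C}_\mathcal{G}(C')^\perp\bigr).
\]
The self-dual basis hypothesis now lets me invoke Lemma \ref{hulltransfer} to rewrite the right-hand side as $\rho(\mathcal{C}_\mathcal{G}(H(C')))$. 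Since $\rho$ is an $\mathbb{F}_q$-linear bijection onto its image and $\mathcal{C}_\mathcal{G}$ transforms an $\mathbb{F}_{q^m}$-subspace of $\mathbb{F}_{q^m}^n$ of dimension $\ell$ into an $\mathbb{F}_q$-subspace of $\mathbb{F}_q^{n\times m}$ of dimension $m\ell$, I conclude $\dim_{\mathbb{F}_q}(H(D')) = m\ell$, as required.

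For the equivalence claim, Lemma \ref{equivalencetransferblock} directly gives that $D = \rho(\mathcal{C}_\mathcal{G}(C))$ and $D' = \rho(\mathcal{C}_\mathcal{G}(C'))$ are linearly $n\times m$ matrix-equivalent, because $C$ and $C'$ are equivalent vector rank-metric codes. This handles all three items uniformly; case (3) is the same argument with $\ell=0$, delivering an LCD extended block code.

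There is no real obstacle here, and no new computation is needed: the proof is an assembly of the results already in place. The only point requiring care is verifying that the chain $H(D') = \rho(\mathcal{C}_\mathcal{G}(H(C')))$ is valid, i.e.\ that the dualities of matrix codes and extended block codes are compatible via $\rho$ (guaranteed by \eqref{rhocom}) and that the self-dual basis hypothesis is in force so that Lemma \ref{hulltransfer} applies. Once these are in place, counting $\mathbb{F}_q$-dimensions and quoting Lemma \ref{equivalencetransferblock} finishes the argument.
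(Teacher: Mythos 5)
Your proposal is correct and follows essentially the same route as the paper: produce $C'$ via Theorems \ref{main} and \ref{main2}, set $D'=\rho(\mathcal{C}_\mathcal{G}(C'))$, use \eqref{rhocom}, Lemma \ref{hulltransferblock} and Lemma \ref{hulltransfer} to identify $H(D')$ with $\rho(\mathcal{C}_\mathcal{G}(H(C')))$, count dimensions, and quote Lemma \ref{equivalencetransferblock} for the equivalence. The only (harmless) difference is that you apply Lemma \ref{hulltransferblock} directly to the matrix codes $\mathcal{C}_\mathcal{G}(C')$ and $\mathcal{C}_\mathcal{G}(C')^\perp$, bypassing the paper's intermediate invocation of Lemma \ref{ravagnani}, which is already subsumed by Lemma \ref{hulltransfer}.
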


 \begin{proof} Fix $\ell$ as follows: 
$\ell \in \{0, \dots, h-1\}$ if $q > 3$; 
$\ell \in \{0, \dots, h-2\}$ if $q \in \{2,3\}$ and $h \ge 2$; 
and $\ell = 0$ if $q \in \{2,3\}$ and $h = 1$. 
 By Theorems \ref{main} and \ref{main2}, there exists a vector rank-metric code $C'$ equivalent to $C$ with $\dim_{\mathbb{F}_{q^m}}(H(C'))=\ell$. Let $D':=\rho(\mathcal{C}_\mathcal{G}(C'))$ be the extended block code   associated with $C'$. By Lemma \ref{equivalencetransfer}, the matrix codes $\mathcal{C}_\mathcal{G}(C)$ and $\mathcal{C}_\mathcal{G}(C')$ are equivalent. Then by Lemma \ref{equivalencetransferblock},   $D'$   is linearly $n \times m$ matrix-equivalent to $D$.
 On the other hand, 
\begin{align*}
H(D') = D'\cap D'^\perp &= \rho(\mathcal{C}_\mathcal{G}(C')) \cap \rho(\mathcal{C}_\mathcal{G}(C'))^\perp \\
&= \rho(\mathcal{C}_\mathcal{G}(C')) \cap \rho(\mathcal{C}_\mathcal{G}(C')^\perp) \tag{by   \eqref{rhocom}}\\
&= \rho(\mathcal{C}_\mathcal{G}(C')) \cap \rho(\mathcal{C}_\mathcal{G}(C'^\perp)) \tag{by Lemma \ref{ravagnani}}\\
&= \rho(\mathcal{C}_\mathcal{G}(C') \cap \mathcal{C}_\mathcal{G}(C'^\perp)) \tag{by Lemma \ref{hulltransferblock}} \\
&= \rho(\mathcal{C}_\mathcal{G}(C' \cap C'^\perp)) \tag{by Lemma \ref{hulltransfer}} \\
&= \rho(\mathcal{C}_\mathcal{G}(H(C'))).
\end{align*}
 From the proof of Corollary \ref{cor1}, we have that $\dim_{\mathbb{F}_q}( \mathcal{C}_\mathcal{G}(H(C'))) = m\ell$. It follows that  $\dim_{\mathbb{F}_q} (H(D')) = m\ell,$ which completes the proof.
 \end{proof}

 \begin{remark}
An interesting question concerns $\mathbb{F}_q$-linear rank-metric codes that are not obtained as associated codes from vector rank-metric codes. 
The behavior of their hulls under equivalence is, to the best of our knowledge, not yet fully understood and may be considered as an open problem.
\end{remark}

\begin{remark} In this final remark, we consider the implications of our results in the context of matroid theory. 
For binary and ternary Hamming-metric codes, it is well-known that the hull dimension is completely determined by the associated matroid 
and can be expressed in terms of its Tutte polynomial, see  \cite{BO} and \cite{J}.
Thus, equivalent Hamming-metric codes over $\mathbb{F}_2$ or $\mathbb{F}_3$ always share the same hull dimension. 
In contrast, for vector rank-metric and matrix codes, equivalent codes may induce isomorphic $q$-matroids or $(q,m)$-polymatroids (see for example \cite{gorla2019}, \cite{jurrius2018} and \cite{shiromoto2019}),
while having different hull dimensions. 
In other words, in the rank-metric setting, the hull dimension is not an invariant captured by the underlying matroid-like structures. 
\end{remark}

\section*{Acknowledgment} The authors wish to thank the Co-Editor-in-Chief Sihem Mesnager and the anonymous referees for their comments   which improved the presentation and quality of this paper. 
D. Ho and T. Johnsen were supported by  the Tromsø Research Foundation (project “Pure Mathematics in Norway”) and  UiT Aurora project MASCOT.

\end{document}